\theoremstyle{definition}
\newtheorem{theorem}{Theorem}
\newtheorem{remark}{Remark}
\newtheorem{lemma}{Lemma}
\newcounter{numrel}
\newcommand{\remove}[1]{}
\begin{document}
\title{Consistency Analysis of Replication-Based Probabilistic Key-Value Stores
} 

\author{Ramy E. Ali }

\maketitle
\begin{abstract}
Partial quorum systems are widely used in distributed key-value stores due to their latency benefits at the expense of providing weaker consistency guarantees. The probabilistically bounded staleness framework (PBS) studied the latency-consistency trade-off of Dynamo-style partial quorum systems through Monte Carlo event-based simulations. In this paper, we study the latency-consistency trade-off for such systems analytically and derive a closed-form expression for the inconsistency probability. Our approach allows fine-tuning of latency and consistency guarantees in key-value stores, which is intractable using Monte Carlo event-based simulations.  
\end{abstract}
\makeatletter{\renewcommand*{\@makefnmark}{}
	\footnotetext{\hrule \vspace{0.05in}Ramy E. Ali (email: reali@usc.edu) is with Ming Hsieh Department of Electrical Engineering, University of Southern California, Los Angeles, CA 90089. This work was done while the author was with the School of Electrical Engineering and Computer Science,  Pennsylvania State University, University	Park, PA 16802 and also with Bell Labs, Murray Hill, NJ 07974.}\makeatother }
\begin{IEEEkeywords}
Eventual Consistency, Probabilistic Consistency, Partial Quorums. 
\end{IEEEkeywords}
\section{Introduction}
 Key-value stores\footnote{Key-value stores  are shared databases that store the data as a collection of key-value pairs.} are essential for many applications such as reservation systems, financial transactions and distributed computing. Such systems  commonly replicate the data across multiple servers to make the data available and accessible with low latency despite the possible failures and  stragglers.  In these systems, the data is frequently updated and it is desirable to make the latest version of the data accessible by the different users.  This requirement is known as consistency in distributed systems \cite{lynch1996distributed}. In order to ensure strong consistency, these systems use strict quorums where the write and the read quorums must intersect \cite{lynch1996distributed,attiya1995sharing}. Specifically, in a system of $N$ servers, where $W$ and $R$ denote the write and the read quorum sizes respectively, $W$ and $R$ are chosen such that $W+R > N$.
 
  In order to have fast access to the data that is critical for many applications, many key-values stores including Amazon's Dynamo \cite{decandia2007dynamo} and Cassandra \cite{lakshman2010cassandra} allow non-strict (partial, probabilistic or sloppy) quorums where $W+R \leq N $. In partial quorum systems  however the write and the read quorums may not intersect. These systems rely on the idea that the write quorum expands as the data propagates to more servers and the write and the read quorum will eventually intersect. Hence, these systems  only guarantee that the users will eventually return the latest version of the data if there are no new write operations\cite{abadi2012consistency, vogels2009eventually}. However, eventual consistency does not specify how soon this will happen. 
 
 Several works studied probabilistic quorum systems, attempted to quantify the staleness of the data retrieved, how soon users can retrieve consistent data and providing adaptive consistency guarantees depending on the application including \cite{malkhi2001probabilistic, wang2010application,sakr2011clouddb,wada2011data,bailis2012probabilistically,chihoub2012harmony,chihoub2013consistency,bailis2014quantifying,golab2014client,liu2015quantitative,mckenzie2015fine,chatterjee2017brief,rahman2017characterizing, zhong2018minimizing}. In \cite{malkhi2001probabilistic}, $\epsilon$-intersecting probabilistic quorum systems were designed such that the probability that any two quorums do not intersect is at most $\epsilon$. In \cite{chihoub2012harmony}, an adaptive approach was proposed that tunes the inconsistency probability, assuming that the response time of the servers are neglected, through controlling the number of servers involved in the read operations at the run-time based on a monitoring module. The monitoring module provides a real-time estimate of the network delays. In this approach, the write operation completes when any server  responds to the write client. While the data is being propagated to the remaining servers, any server is pessimistically considered stale except the first server that responded to the write operation. Hence, this approach does not fully capture expanding write quorums (anti-entropy) \cite{demers1987epidemic}.

In \cite{bailis2012probabilistically, bailis2014quantifying}, the trade-off that partial quorum systems provide between the staleness of the retrieved data and the latency was studied in $3$-way replication-based key-value stores. Specifically, this work answered the question of how stale is the retrieved data through the notion of $l$-staleness, which measures the probability that the users retrieve one of the $l$ latest complete versions. The question of how eventual a user can read consistent data is also studied in \cite{bailis2012probabilistically, bailis2014quantifying} through the notion of $t$-visibility that measures the  probability of returning the value of a write operation $t$ units of times after it completes. While the write operation completes upon receiving acknowledgments from any $W$ servers,  more servers receive the write request after that and the write quorum can continue to expand. Characterizing the $t$-visibility is challenging as it depends on how the write quorum expands based on the delays of the write and the read requests. Hence, the study of \cite{bailis2014quantifying} focused on obtaining insights about this question for $3$-way replication through Monte Carlo simulations. Such simulations however need to be done for all possible values of the quorum sizes and the write  and  read delays. 

Our work instead aims to understand probabilistic quorum system theoretically and to derive closed-form expressions for the inconsistency probability in terms of those parameters. For the widely-used $3$-way replication-based systems, we derive an explicit simple closed-form expression for the inconsistency probability in terms of those parameters. In addition, we provide an approach for analyzing  replication schemes in general with any replication factor.  Finally, our approach can be extended for erasure-coded key-value stores which have not  been investigated before even using  Monte Carlo simulations. 

The rest of this paper is organized as follows. In Section \ref{Model}, we describe the system model. In Section \ref{Quorums}, we study expanding quorums and characterize the probability mass function of the write quorum size. We analyze the inconsistency probability of  partial quorum systems in Section \ref{Consistency Replication}. Finally, concluding remarks are discussed in Section \ref{Conclusion}.

\section{System Model}
\label{Model}
In this section, we describe our system model. We consider a distributed system with $N$ servers denoted by $\mathcal N=\{1, 2, \cdots, N\}$ storing a shared object. A client that issues a write request sends the request to all servers and waits for the acknowledgment of $W$ servers for the write operation to complete. We denote the time that a write request takes to reach to server $i$ in addition to the server's response time by $X_i$, where $i \in \mathcal N$. We assume that $X_1, X_2, \cdots, X_N$ are independent and identically distributed  exponential random variables with parameter $\lambda$. A client that issues a read request sends the request to all servers and waits for $R$ servers to respond. The time the read request takes to reach server $i$ and the server's response time is denoted by $Z_i, i \in \mathcal N$. We assume that the read delays $Z_1, Z_2, \cdots, Z_N$ are independent and identically distributed random variables according to exponential distribution with parameter $\xi$. Finally, we assume that write and read acknowledgments are instantaneous.

In strict quorum systems, $W$ and $R$ are chosen such that $W+R > N$. In partial quorum systems however, $W+R \leq N$ and hence the write and the read quorums may not intersect. This may result in a consistency violation. In real-world quorum systems however, the write quorum expands as the write request propagate to more servers. In \cite{bailis2012probabilistically}, the notion of $t$-visibility was developed  to capture the probability of inconsistency for expanding quorums for a read operation that starts $t$ units of time after the write completes. Our goal in this work is to characterize the inconsistency probability for expanding quorums as a function of $t$, the quorum sizes and the network delays.

\section{Expanding Quorums}
\label{Quorums}
In this section, we characterize the probability distribution of the number of servers in the write quorum $t$ units of time after the write completes. As we have explained, a client that issues a write request sends the request to all $N$ servers and waits to receive acknowledgments from any $W$ servers. The first $W$ received responses determine the write latency $X_{(W)}$, where $X_{(i)}$ denotes the $i$-th smallest of $X_1, X_2, \cdots, X_N$. However, the write quorum will continue to expand as more servers receive the write request. We denote the set of servers that have received the write value $t$ units of time after it completes by $\mathcal S(t)$, where $S(t) \coloneqq |\mathcal S(t)|$ and $S(0)=W$. In Theorem \ref{t-visibility for Exponential Delays}, we characterize the probability mass function (PMF) of $S(t)$.
\begin{theorem}[Dynamic Quorum Size] 
	\label{t-visibility for Exponential Delays}	
	The PMF of the number of servers that have received a complete version $t$ units of time after it completes, $S(t)$, is given by
	\begin{align}
	&\Pr[S(t)=W]=e^{-\lambda_{W+1} t},  \\
	&\Pr[S(t)=s]=\sum\limits_{i=W+1}^{s+1} (-1)^{s-i} (1-e^{- \lambda_i t}) \notag \\ & \  \ \ \ \ \ \ \ \ \ \ \ \ \ \ \ \ \binom{N-W}{N-i+1} \binom{N-i+1}{s-i+1}, 
	\end{align}	
	for $s \in \{W+1, W+2, \cdots, N\}$, where $\lambda_i=(N-i+1) \lambda$.
\end{theorem}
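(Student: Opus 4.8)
The plan is to reduce the PMF of $S(t)$ to the law of the spacings of the order statistics $X_{(1)}\le\cdots\le X_{(N)}$, and then to evaluate that law in closed form. The starting observation is that, since $S(0)=W$ and a server joins the write quorum exactly when its delay $X_i$ elapses, we have $S(t)\ge s$ if and only if the $s$-th fastest server has already responded by time $X_{(W)}+t$, i.e. $X_{(s)}\le X_{(W)}+t$. Writing $F_s(t)\coloneqq\Pr[X_{(s)}-X_{(W)}\le t]$ with the conventions $X_{(W)}-X_{(W)}=0$ and $X_{(N+1)}=\infty$, this yields
\begin{align}
\Pr[S(t)=s]=\Pr[S(t)\ge s]-\Pr[S(t)\ge s+1]=F_s(t)-F_{s+1}(t),
\end{align}
so everything reduces to the distributions of the differences $X_{(s)}-X_{(W)}$.

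The main tool is the memoryless property of the exponential distribution. Conditioned on $X_{(W)}$ and on which $W$ servers responded first, the residual delays of the remaining $N-W$ servers are i.i.d. $\mathrm{Exp}(\lambda)$ and independent of $X_{(W)}$; equivalently, by the R\'enyi representation the normalized spacings $D_i\coloneqq X_{(i)}-X_{(i-1)}$ for $i>W$ are independent with $D_i\sim\mathrm{Exp}(\lambda_i)$, $\lambda_i=(N-i+1)\lambda$. Two consequences follow. First, $\Pr[S(t)=W]=\Pr[D_{W+1}>t]=e^{-\lambda_{W+1}t}$, which is exactly the first claimed identity. Second, $X_{(s)}-X_{(W)}=\sum_{i=W+1}^{s}D_i$ is a hypoexponential (generalized Erlang) variable with distinct rates $\lambda_{W+1},\dots,\lambda_s$, whose CDF expands by partial fractions as
\begin{align}
F_s(t)=\sum_{i=W+1}^{s}A_i^{(s)}\bigl(1-e^{-\lambda_i t}\bigr),\qquad A_i^{(s)}=\prod_{\substack{j=W+1\\ j\ne i}}^{s}\frac{\lambda_j}{\lambda_j-\lambda_i}.
\end{align}

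It then remains to subtract $F_{s+1}$ from $F_s$ and to simplify the coefficients. Using $\lambda_j=(N-j+1)\lambda$, each ratio becomes $\lambda_j/(\lambda_j-\lambda_i)=(N-j+1)/(i-j)$, so the $A_i^{(s)}$ are products of integers that I expect to collapse, after a change of index, into a single ratio of factorials, i.e. a binomial coefficient. Concretely, the target is to show that the coefficient of $(1-e^{-\lambda_i t})$ in $F_s(t)-F_{s+1}(t)$, which is $A_i^{(s)}-A_i^{(s+1)}$ for $W+1\le i\le s$ and $-A_{s+1}^{(s+1)}$ for $i=s+1$, equals $(-1)^{s-i}\binom{N-W}{N-i+1}\binom{N-i+1}{s-i+1}$. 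As a consistency check at the top index, $A_{s+1}^{(s+1)}=\prod_{j=W+1}^{s}\frac{N-j+1}{s+1-j}=\binom{N-W}{N-s}$, which matches the claimed $i=s+1$ term; the general index needs the analogous factorial reduction together with the difference of the two weights. I expect this algebraic collapse, turning the partial-fraction weights into a product of two binomials with the correct alternating sign, to be the main obstacle, whereas all of the probabilistic content sits in the single memorylessness step.

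A useful cross-check, and an alternative purely combinatorial route to the same conclusion, is that memorylessness also yields the compact form $S(t)-W\sim\mathrm{Binomial}\bigl(N-W,\,1-e^{-\lambda t}\bigr)$. One may therefore instead verify the equivalent polynomial identity in the variable $q=e^{-\lambda t}$,
\begin{align}
\binom{N-W}{s-W}(1-q)^{s-W}q^{N-s}=\sum_{i=W+1}^{s+1}(-1)^{s-i}\bigl(1-q^{N-i+1}\bigr)\binom{N-W}{N-i+1}\binom{N-i+1}{s-i+1},
\end{align}
which isolates the combinatorial heart of the statement and confirms the stated formula agrees with the binomial PMF on $\{W,\dots,N\}$.
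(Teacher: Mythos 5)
Your main line of argument coincides with the paper's: the paper also writes $\Pr[S(t)=s]$ as the difference of the CDFs of $X_{(s)}-X_{(W)}$ and $X_{(s+1)}-X_{(W)}$, decomposes these into independent spacings $Y_i\sim\mathrm{exp}(\lambda_i)$ via R\'enyi's representation, and applies the partial-fraction expansion of the hypoexponential CDF --- exactly Lemmas~\ref{Order Statistics} and~\ref{Sum of Exponentials} of the appendix. The one step you defer, the collapse of the coefficients, is the only nontrivial computation and should be written out rather than ``expected'': for $W+1\le i\le s$ the two weights combine as $A_i^{(s)}-A_i^{(s+1)}=A_i^{(s)}\bigl(1-\tfrac{\lambda_{s+1}}{\lambda_{s+1}-\lambda_i}\bigr)=A_i^{(s)}\tfrac{N-i+1}{s-i+1}$, and since $\lambda_j/(\lambda_j-\lambda_i)=(N-j+1)/(i-j)$ the integer product telescopes to
\begin{align*}
A_i^{(s)}\frac{N-i+1}{s-i+1}=(-1)^{s-i}\frac{(N-W)!}{(N-s)!\,(i-W-1)!\,(s-i+1)!}=(-1)^{s-i}\binom{N-W}{N-i+1}\binom{N-i+1}{s-i+1},
\end{align*}
consistent with your boundary check at $i=s+1$; so the gap closes exactly as you anticipated. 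Your closing observation that memorylessness gives $S(t)-W\sim\mathrm{Binomial}(N-W,\,1-e^{-\lambda t})$ is a genuinely different, and arguably cleaner, route that the paper does not take: it produces the PMF in one line and reduces the theorem's alternating-sum form to a deterministic polynomial identity in $q=e^{-\lambda t}$, which itself follows by inclusion--exclusion on the binomial upper tail. If you adopt that route, prove the identity rather than asserting it; as submitted, both of your endings stop one computation short of a complete proof.
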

\noindent We provide the proof of Theorem \ref{t-visibility for Exponential Delays} in Appendix A. 

\noindent In Fig. \ref{Prob_Mass_c_W_1_c_R_1}, we show the PMF of $S(1)$ for  $N=3, W=1$ and $\lambda=1$. In Fig. \ref{Prob_Mass_c_W_2_c_R_1}, we show the PMF of $S(1)$ for $N=3, W=2$ and $\lambda=1$.

\begin{figure}[h]
	\centering
	\includegraphics[scale=0.125]{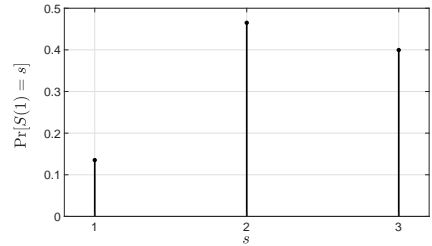}
	\caption{The probability mass function of $S(1)$ for the case where $N=3, W=1$ and $\lambda=1$\label{Prob_Mass_c_W_1_c_R_1}. }
\end{figure}

\begin{figure}[h]
	\centering
	\includegraphics[scale=0.125]{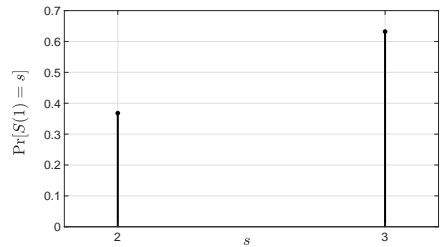}
	\caption{The probability mass function of $S(1)$ for the case where $N=3, W=2$ and $\lambda=1$\label{Prob_Mass_c_W_2_c_R_1}.}
\end{figure}
\section{Consistency Analysis}
\label{Consistency Replication}
In this section, we study the inconsistency probability of  partial quorum systems.
The inconsistency probability is the probability that the write quorum and the read quorum do not intersect. The worst-case probability of inconsistency assuming non-expanding write quorums and instantaneous reads \cite{bailis2012probabilistically} is given by 
\begin{align}
\label{equation: inconsistency probability}
p=\frac{\binom{N-W}{R}} {\binom{N}{R}}.
\end{align}
Since the write quorum expands as the write request propagate to more servers, equation (\ref{equation: inconsistency probability}) is in fact an upper bound of the inconsistency probability \cite{bailis2012probabilistically}. Our objective in this section is to characterize the exact inconsistency probability for expanding quorums. The read client returns inconsistent data if the first $R$ servers that respond to the read request return stale data. A server is considered stale if it replies to the read request before receiving the latest compete version. That is, server $i$ is stale if 
\begin{align}
X_{(W)}+t+Z_i < X_i. 
\end{align}
Denote the first $R$ servers that respond to the read request by $\mathcal R=\{r_1, r_2, \cdots, r_{R}\}$, where $r_1$ is the server the replies first, $r_2$ is the server that replies second and so on. The event that server $r_j$ is stale is expressed as follows
\begin{align}
E_j = \{X_{(W)}+t+Z_{(j)} < X_{r_j}  \}  
&=\{r_j \notin \mathcal S(t+Z_{(j)}) \}.
\end{align}
In order to keep the notation simple, we denote $\mathcal S(t+Z_{(j)})$ by $\mathcal S_j$. The probability that a read returns stale data $t$ units of time after that latest version completes is the probability that all servers in $\mathcal R$ return stale data. Thus, the inconsistency probability can be expressed as follows
\begin{align}
p_t=\Pr[\text{All servers in} \  \mathcal R \ \text{are stale}] =\Pr \left[ \bigcap\limits_{j=1}^{R}  E_j \right].
\end{align}
\noindent The main challenge in characterizing the inconsistency probability is that the events $E_1, E_2, \cdots, E_R$ are dependent. Hence, we express the probability that these events intersect as follows
\begin{align}
p_t&=\Pr \left[r_1 \notin \mathcal S_1, r_2 \notin \mathcal S_2, \cdots, r_{R} \notin \mathcal S_{R} \right] \notag \\
&=\Pr \left[r_{R} \notin \mathcal S_{R}| r_{R-1} \notin \mathcal S_{R-1}, \cdots,  r_1 \notin \mathcal S_1 \right] \cdots  \notag \\
& \ \ \ \Pr \left[r_2 \notin \mathcal S_2 |r_1 \notin \mathcal S_1  \right] \Pr \left[r_1 \notin \mathcal S_1 \right].
\end{align}
This dependency requires careful handling. In order to see this, we consider the case where $R=2$ for instance. The probability of inconsistency can be first expressed as follows
\begin{align}
p_t &=\Pr \left[r_2 \notin \mathcal S_2, r_1 \notin \mathcal S_1 \right] \notag \\
&=\Pr \left[r_2 \notin \mathcal S_2 | r_1 \notin \mathcal S_1 \right] \Pr \left[r_1 \notin \mathcal S_1 \right]. 
\end{align} 
 However,  we cannot find the probability of the event $\{r_2 \notin \mathcal S_2 | r_1 \notin \mathcal S_1\}$ directly.  This is because when $r_1 \notin \mathcal S_1$, we need to consider two sub-cases. The first case is the case where $r_1 \notin \mathcal S_2$. The second case is the case where $r_1 \in \mathcal S_2$, hence $r_1 \in \mathcal S_2-\mathcal S_1$. Based on this reasoning,  the inconsistency probability can be expressed as follows 
\begin{align}
p_t &=\Pr \left[r_2 \notin \mathcal S_2 | r_1 \notin \mathcal S_2 \right] \notag \Pr \left[r_1 \notin \mathcal S_2 \right]   \\ &+  \Pr \left[ r_2 \notin \mathcal S_2 | r_1 \in \mathcal S_2-\mathcal S_1\right]  \Pr[r_1 \in \mathcal S_2- \mathcal S_1].
\end{align} 
Each of these terms can be then expressed in terms of the expected network delays  and the quorum sizes to get a closed-form expression for the inconsistency probability in this case.  In Theorem \ref{theorem: consistency violation for replication}, that we prove in Appendix B, we provide our main result in which we characterize the inconsistency probability of the widely-used $3$-way replication technique.
\begin{theorem}[Inconsistency Probability] \  
\label{theorem: consistency violation for replication}	
\begin{itemize}
\item The  worst-case inconsistency probability for the case where $W=1$ and $R=1$ is expressed as follows
\begin{align}
p_t=  \frac{2\xi e^{-\lambda t}}{\lambda+3\xi}.
\end{align}	
\item The  worst-case inconsistency probability for the case where $W=2$ and $R=1$ is expressed as follows
\begin{align}
p_t=  \frac{\xi e^{-\lambda t}}{\lambda+3\xi}.
\end{align}	
\item The worst-case inconsistency probability for the case where $W=1$ and $R=2$ is expressed as follows
\begin{align}
p_t &=\frac{6\xi^3e^{-2\lambda t}}{(\lambda+2\xi)(\lambda+3\xi)} \notag \\ &\left(\frac{2 \lambda}{(\lambda+2\xi)(\lambda+3 \xi)}-  \frac{ (\lambda-\xi) e^{-\lambda t}}{(\lambda+\xi)(2\lambda+3 \xi)}\right).
\end{align}
\end{itemize}
\end{theorem}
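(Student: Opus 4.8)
The plan is to reduce every staleness event to a statement about residual write delays and then integrate, reusing the memorylessness already exploited in Theorem~\ref{t-visibility for Exponential Delays}. Writing $D_i = X_i - X_{(W)}$ for the delay (after the write completes) with which server $i$ receives the value, the staleness condition $X_{(W)}+t+Z_{(j)} < X_{r_j}$ becomes simply $D_{r_j} > Z_{r_j}+t$, because $Z_{(j)}=Z_{r_j}$. By the lack of memory of the exponential distribution, the $N-W$ servers outside the initial write quorum have residual delays that are i.i.d.\ $\mathrm{Exp}(\lambda)$ and independent of all read delays, whereas the $W$ servers already in the quorum have $D_i=0$ and can never be stale. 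For $N=3$ this leaves two ``late'' servers when $W=1$ and a single late server when $W=2$, which is exactly what keeps the three cases elementary.

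For the two $R=1$ cases I would write $p_t=\sum_{i=1}^{3}\Pr[\,Z_i=\min_k Z_k,\ D_i>Z_i+t\,]$, the probability that the fastest reader is stale. Since $D_i$ is a function of the write delays alone, it is independent of the read delays, so I would first obtain the tail $\Pr[D_i>u]$ from the order statistics of $X_1,X_2,X_3$ (an atom at $0$ plus an exponential tail, with continuous mass $\tfrac23 e^{-\lambda u}$ for $W=1$ and $\tfrac13 e^{-\lambda u}$ for $W=2$) and then integrate it against the density of $\min_k Z_k$ while the other two reads survive. One exponential integral then yields $\tfrac{2\xi e^{-\lambda t}}{\lambda+3\xi}$ and $\tfrac{\xi e^{-\lambda t}}{\lambda+3\xi}$.

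The substantive case is $W=1,\ R=2$, and the obstacle is exactly the dependence of $E_1$ and $E_2$ noted in the text: $\mathcal S_1\subseteq\mathcal S_2$ and the thresholds $Z_{(1)}\le Z_{(2)}$ are coupled order statistics. The simplification I would lean on is that both responders can be stale only if the never-stale write-min server is the \emph{slowest} reader, forcing $r_1,r_2$ to be the two late servers, whose residuals are i.i.d.\ $\mathrm{Exp}(\lambda)$. I would then use the disjoint decomposition $E_1\cap E_2=\{r_1\notin\mathcal S_2,\ r_2\notin\mathcal S_2\}\cup\{r_1\in\mathcal S_2-\mathcal S_1,\ r_2\notin\mathcal S_2\}$ suggested in the excerpt and evaluate the two pieces. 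Conditioning on the two smaller read delays, the requirement that the write-min be slowest contributes a factor $e^{-\xi\max(Z_{r_1},Z_{r_2})}$, each residual-survival event contributes an $e^{-\lambda(\cdot+t)}$ factor, and the remaining double integral splits on the order of the two read delays into elementary exponential integrals whose sum is the closed form.

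The hard part will be the bookkeeping in this last step, not any individual integral: one must keep straight which threshold, $Z_{(1)}+t$ or $Z_{(2)}+t$, governs each survival event in each of the two pieces, and keep the conditioning on which server is the write-min consistent so that the symmetry factor $3\cdot\tfrac13$ cancels. Because this is where constant and sign slips are easiest to introduce, I would validate the final expression against two limits before trusting it: instantaneous writes ($\lambda\to\infty$), where $p_t\to0$, and instantaneous reads ($\xi\to\infty$), where the read sees a frozen quorum and $p_t$ must collapse to the static intersection bound $\binom{N-W}{R}/\binom{N}{R}=\tfrac13$ times the survival probability $e^{-2\lambda t}$ of the two late servers, i.e.\ $p_t\to\tfrac13 e^{-2\lambda t}$.
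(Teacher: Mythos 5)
Your reduction via memorylessness is sound and genuinely different from the paper's route. The paper first integrates the PMF of $S(t)$ from Theorem~\ref{t-visibility for Exponential Delays} against the density of $Z_{(j)}$ (Lemma~\ref{Lemma: pmf}) and then evaluates a conditional decomposition using quantities such as $\Pr[r_2\notin\mathcal S_2\mid r_1\notin\mathcal S_2]=\sum_s(1-\tfrac{s}{N-1})\Pr[S(t+Z_{(2)})=s]$; you instead work directly with the residuals $D_i=X_i-X_{(W)}$, which for the $N-W$ servers outside the initial quorum are i.i.d.\ $\mathrm{Exp}(\lambda)$, independent of the identity of the write-min and of all read delays. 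For $R=1$ your computation is correct and reproduces the first two bullets.

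For $W=1,R=2$ your own structural observation does more work than you give it credit for, and it is exactly here that you must stop and reconcile rather than ``validate and trust.'' Once you note that both responders can be stale only if the write-min is the slowest reader (probability $\tfrac13$, independent of the read delays) and that conditionally $D_{r_1},D_{r_2}$ are i.i.d.\ $\mathrm{Exp}(\lambda)$ and independent of the reads, the case split on $r_1\in\mathcal S_2-\mathcal S_1$ versus $r_1\notin\mathcal S_2$ is unnecessary: the probability collapses to
\begin{align*}
p_t=\tfrac13\,e^{-2\lambda t}\,\mathbb{E}\!\left[e^{-\lambda(Z_{(1)}+Z_{(2)})}\right]=\frac{2\xi^{2}e^{-2\lambda t}}{(\lambda+2\xi)(2\lambda+3\xi)},
\end{align*}
using $Z_{(1)}\sim\mathrm{Exp}(3\xi)$ and $Z_{(2)}-Z_{(1)}\sim\mathrm{Exp}(2\xi)$ independent. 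This is \emph{not} the expression in the third bullet, and your own proposed sanity check already detects the conflict: as $\xi\to\infty$ your (correct) limit is $\tfrac13 e^{-2\lambda t}$, whereas the stated formula tends to $\tfrac13 e^{-3\lambda t}$ (and at $\lambda=\xi=1$, $t=0$ the two give $\tfrac{2}{15}$ versus $\tfrac{1}{12}$). The discrepancy traces to the paper's proof, which evaluates $\Pr[r_2\notin\mathcal S_2\mid r_1\notin\mathcal S_2]$ by averaging $1-\tfrac{s}{N-1}$ against the \emph{unconditional} PMF of $|\mathcal S_2|$ and then multiplies by $\Pr[r_1\notin\mathcal S_2]$, yielding a product of two separate averages over $|\mathcal S_2|$ where the single joint average $\sum_s\tfrac{(N-s)(N-s-1)}{N(N-1)}\Pr[|\mathcal S_2|=s]$ is required; conditioning on $r_1\notin\mathcal S_2$ biases $|\mathcal S_2|$ toward smaller values, so the two do not agree. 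Your approach avoids this pitfall, but as written your plan cannot terminate by confirming the third bullet; you need to state explicitly that the formula you will prove differs from the one claimed.
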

\begin{remark}
It can be verified that at $t=0$, the limit of the inconsistency probability of Theorem \ref{theorem: consistency violation for replication} as $\xi$ grows is equal to the inconsistency probability assuming instantaneous reads given in (\ref{equation: inconsistency probability}). That is, we have
\begin{align}
 \lim_{\xi \to\infty} p_0=p.
 \end{align}
\end{remark}
\noindent It is worth noting that the upper bound of the inconsistency probability given in (\ref{equation: inconsistency probability}) is quite loose. In order to see this, we observe that this bound gives an inconsistency probability of $1/3$ for the case where $W=2, R=1$ and also for the case where $W=1, R=2$. Hence, this bound does not differentiate between these two cases. We show the probability of inconsistency for the different cases in Fig. \ref{Inconsistency_Prob_Replication}.
\begin{figure}[h]
	\centering
	\includegraphics[scale=0.185]{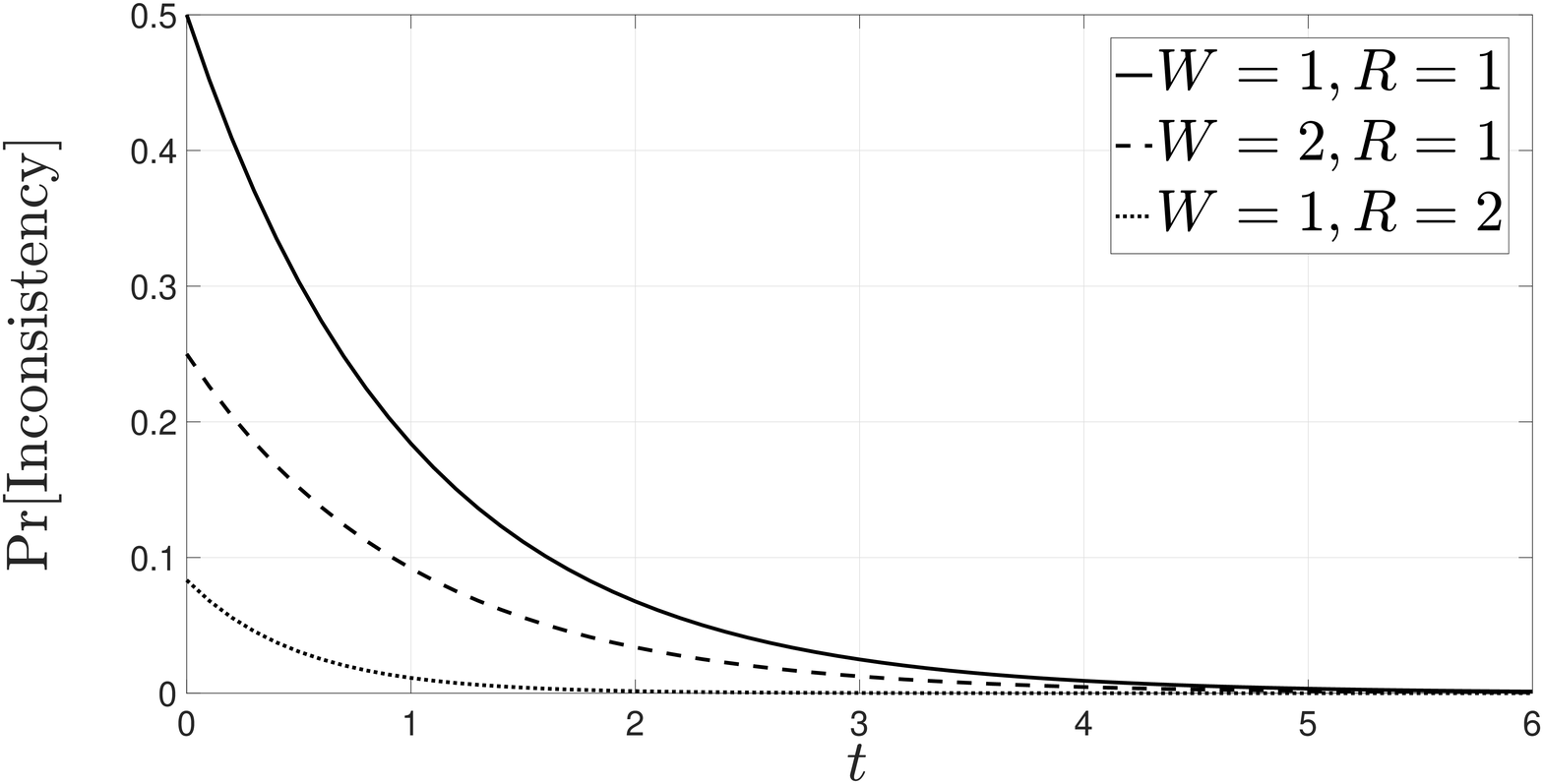}
	\caption{The probability of inconsistency for the case where $N=3, \lambda=1$ and $\xi=1$\label{Inconsistency_Prob_Replication}.}
\end{figure}

\begin{remark}[Asymmetry]
It is worth noting that the inconsistency probability is asymmetric in the write and read quorum sizes and also the write and read mean delays.   
\end{remark}
\begin{figure}[h]
	\centering
	\includegraphics[scale=0.45]{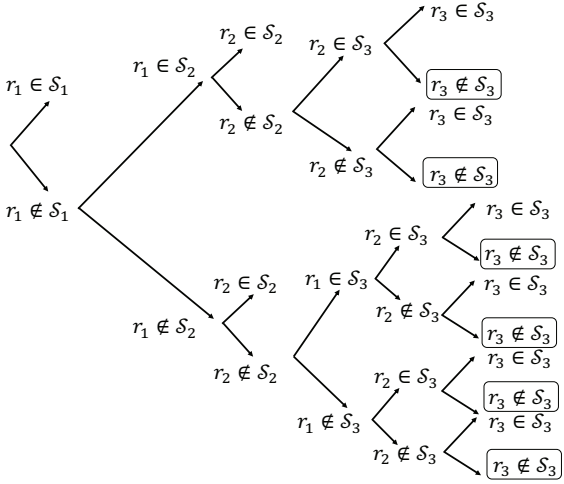}
	\caption{Inconsistency cases for the case where $R = 3$\label{Inconsistency_cases_R_3}.}
\end{figure}
 \begin{remark}[Replication Factor]
 While the case of $N=3$ is the typical case in replication-based systems, we can derive the inconsistency probability for general $N, W$ and $R$ \cite[Ch.\ 4]{alithesis}. There are $R !$ cases to be considered. For instance, for $R=3$, the following cases shown in Fig. \ref{Inconsistency_cases_R_3} lead to violating the consistency
	\begin{enumerate}
	\item $\left(r_3 \notin \mathcal S_3, r_2 \in \mathcal S_3-\mathcal S_2, r_1 \in \mathcal S_2-\mathcal S_1 \right)$,
	\item  $\left(r_3 \notin \mathcal S_3, r_2 \notin \mathcal S_3, r_1 \in \mathcal S_2-\mathcal S_1\right)$, 
	\item $\left(r_3 \notin \mathcal S_3, r_2 \in \mathcal S_3-\mathcal S_2, r_1 \in \mathcal S_3-\mathcal S_2 \right)$,
	\item $\left(r_3 \notin \mathcal S_3, r_2 \notin \mathcal S_3, r_1 \in \mathcal S_3-\mathcal S_2 \right)$, 
	\item $\left(r_3 \notin \mathcal S_3, r_2 \in \mathcal S_3-\mathcal S_2, r_1 \notin \mathcal S_3 \right)$,
	\item  $\left(r_3 \notin \mathcal S_3, r_2 \notin \mathcal S_3, r_1 \notin \mathcal S_3 \right)$. 
\end{enumerate}

 \end{remark}
\begin{remark} [Beyond Replication and Exponential Delays]
The proof technique of Theorem \ref{theorem: consistency violation for replication} can be used to characterize the inconsistency probability for any given distributions of the write and read delays such as shifted exponential distribution. This approach can be extended also to simple erasure-coded probabilistic key-value stores, where each version of the data is encoded using a maximum distance separable (MDS) code of dimension $k$. In erasure-coded strict quorum systems, $W$ and $R$ are chosen such that $W+R-N \geq k$ to ensure that the write and read quorums intersect in at least $k$ servers. In erasure-coded probabilistic quorum systems however, the quorum sizes may be selected such that $W+R-N < k$ to provide faster access to the data. Characterizing the inconsistency probability of these systems is challenging and we refer the reader to \cite[Ch.\ 5]{alithesis} for a follow up in this direction.  
\end{remark}

\section{Conclusion}  
\label{Conclusion}
In this paper, we have studied the consistency-latency trade-off for Dynamo-style replication-based key-value stores analytically and  derived a closed-form expression for the inconsistency probability for the $3$-way replication technique. Our study allows fine-tuning of latency and consistency guarantees based on the mean values of the write and read delays of the data store. An immediate future work is to incorporate our tuning policy in a distributed key-value store and evaluate its performance. Extending this study to derive a tight upper bound on the inconsistency probability for any given distributions of the write delays, read delays and acknowledgments delays is also an interesting future research direction.
\section*{Acknowledgment}
The author would like to thank Viveck Cadambe and Mohammad Fahim for their helpful comments.

\bibliographystyle
{IEEEtran}
\bibliography{IEEEabrv,Nulls}
\section{Appendices}
\label{Appendices}
We begin with a brief background about exponential random variables that we build on later to prove our results. We first recall the following useful Lemma \cite{renyi1953theory} for the order statistics of independent exponential random variables with a common parameter $\lambda$.
\begin{lemma}[Order Statistics of Independent Exponentials] 
	\label{Order Statistics}	
	Let $X_1, X_2, \cdots, X_n$ be independent and identically distributed random variables according to $\rm exp(\lambda)$, then we have
	\begin{align}
	Y_i \coloneqq X_{(i)}-X_{(i-1)} \sim \mathrm{exp}((n-i+1) \lambda),
	\end{align} 
	where $X_{(i)}$ denotes the $i$-th smallest of $X_1, X_2, \cdots, X_n$, $i \in \{1, 2 ,\cdots, n\}$ and $X_{(0)}=0$.
\end{lemma}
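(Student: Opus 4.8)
The plan is to treat the three cases uniformly by writing the inconsistency event as an intersection of per-server staleness events and exploiting that the read delays $Z_1,\dots,Z_N$ are independent of the write delays $X_1,\dots,X_N$. Because of this independence, the identity of the ordered read responders $(r_1,\dots,r_R)$ is a uniformly random permutation independent of the write delays, and the read order statistics $Z_{(1)}<Z_{(2)}<\cdots$ carry all the read-side information. I would therefore condition on the realized values of the relevant read order statistics and reduce every probability to a statement purely about $X_1,X_2,X_3$, which are i.i.d. $\exp(\lambda)$. The key write-side tool is Lemma~\ref{Order Statistics}: it lets me replace $X_{(1)},X_{(2)},X_{(3)}$ by their independent gaps $X_{(2)}-X_{(1)}\sim\exp(2\lambda)$ and $X_{(3)}-X_{(2)}\sim\exp(\lambda)$, and analogously gives $Z_{(1)}\sim\exp(3\xi)$ with $Z_{(2)}-Z_{(1)}\sim\exp(2\xi)$ on the read side; together with the memorylessness of the exponential this makes each conditional staleness probability an elementary exponential expectation.

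For the two $R=1$ cases I would proceed directly. Writing the single responder's staleness as $\{X_{r_1}>X_{(W)}+t+Z_{(1)}\}$ and averaging over the uniform choice of $r_1$, the probability reduces to $\tfrac13\,\mathbb{E}\big[\#\{k: X_k>X_{(W)}+t+Z_{(1)}\}\big]$. When $W=2$ only the slowest writer can exceed $X_{(2)}+t+Z_{(1)}$, so the count is governed by the single gap $X_{(3)}-X_{(2)}\sim\exp(\lambda)$; when $W=1$ both $X_{(2)}$ and $X_{(3)}$ can exceed $X_{(1)}+t+Z_{(1)}$, so both the gap $X_{(2)}-X_{(1)}$ and the two-gap sum $X_{(3)}-X_{(1)}$ contribute. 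Taking the expectation over $Z_{(1)}\sim\exp(3\xi)$ then yields the closed forms $\xi e^{-\lambda t}/(\lambda+3\xi)$ and $2\xi e^{-\lambda t}/(\lambda+3\xi)$ respectively.

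The crux is the $W=1,R=2$ case, where $E_1$ and $E_2$ are dependent because they share $X_{(1)}$ and because $\mathcal S_1\subseteq\mathcal S_2$. Following the decomposition already indicated in the text, I would split $\{r_1\notin\mathcal S_1\}$ into the disjoint events $\{r_1\notin\mathcal S_2\}$ and $\{r_1\in\mathcal S_2\setminus\mathcal S_1\}$, i.e. according to whether the first responder is still stale at the later read time $X_{(1)}+t+Z_{(2)}$ or received the value in the window between the two read responses. Conditioning on $(Z_{(1)},Z_{(2)})=(z_1,z_2)$, each branch becomes a probability about the triple $(X_{r_1},X_{r_2},X_{r_3})$; the requirement that both responders be stale forces the non-read server $r_3$ to carry the minimum write delay, after which I can integrate out that minimum and the remaining two delays by memorylessness. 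Finally I would average over the joint law of $(Z_{(1)},Z_{(2)})$ using the gap representation $Z_{(1)}\sim\exp(3\xi)$, $Z_{(2)}-Z_{(1)}\sim\exp(2\xi)$.

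The hard part will be the bookkeeping in this last case: keeping the two subcases disjoint, correctly identifying which server holds the minimum write delay in each branch, and then carrying out the nested integration over the region $z_1<z_2$ without dropping cross terms between the $e^{-\lambda t}$ and higher-order $t$-contributions. As consistency checks I would verify that the $R=1$ specializations of the method agree with the direct computations above, and that setting $t=0$ and letting $\xi\to\infty$ recovers the instantaneous-read bound $p=\binom{N-W}{R}/\binom{N}{R}=1/3$, as required by the accompanying remark.
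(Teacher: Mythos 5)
Your proposal does not address the statement at hand. The statement is Lemma~\ref{Order Statistics}, the classical R\'enyi representation: for $X_1,\dots,X_n$ i.i.d.\ $\exp(\lambda)$, the spacings $Y_i = X_{(i)}-X_{(i-1)}$ are exponential with parameter $(n-i+1)\lambda$. What you have written is instead a proof plan for Theorem~\ref{theorem: consistency violation for replication} (the inconsistency probabilities for the three quorum configurations $W=1,R=1$; $W=2,R=1$; $W=1,R=2$). Indeed, your text explicitly invokes Lemma~\ref{Order Statistics} as ``the key write-side tool'' to replace $X_{(1)},X_{(2)},X_{(3)}$ by independent gaps --- that is, you assume the very statement you were asked to prove. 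Nothing in your proposal establishes the distribution of the spacings; it is taken for granted throughout, so as a proof of the lemma it is vacuous.

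A proof of the lemma would go along one of two standard lines. (a) Induction via memorylessness: $X_{(1)}$ is the minimum of $n$ i.i.d.\ $\exp(\lambda)$ variables, so $\Pr[X_{(1)}>x]=e^{-n\lambda x}$, giving $Y_1\sim\exp(n\lambda)$; conditional on $X_{(1)}$ and the identity of the minimizer, the excesses $X_j-X_{(1)}$ of the remaining $n-1$ variables are again i.i.d.\ $\exp(\lambda)$ by memorylessness, so $Y_2$ is the minimum of $n-1$ i.i.d.\ exponentials, hence $\exp((n-1)\lambda)$, and so on down to $Y_n\sim\exp(\lambda)$. (b) Density computation: write the joint density of $(X_{(1)},\dots,X_{(n)})$, namely $n!\,\lambda^n e^{-\lambda\sum_i x_i}$ on the ordered region, change variables to the spacings $(y_1,\dots,y_n)$ (the Jacobian is $1$), and observe that the result factors as $\prod_{i=1}^{n}(n-i+1)\lambda\, e^{-(n-i+1)\lambda y_i}$, which gives both the claimed marginals and the independence of the spacings (the independence is implicitly needed later in the paper when Lemma~\ref{Sum of Exponentials} is applied to the $Y_i$). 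For what it is worth, the paper itself does not prove this lemma either --- it recalls it from the cited reference \cite{renyi1953theory} --- but that does not make your submission a proof of it.
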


\noindent We also recall the following Lemma from \cite{bibinger2013notes} which studies the sum of independent exponential random variables with different parameters.
\begin{lemma}[Sum of Exponentials]
	\label{Sum of Exponentials}	
	Let $Y_1, Y_2, \cdots, Y_n$ be independent exponentials random variables with parameters $\lambda_1, \lambda_2, \cdots, \lambda_n$ respectively, where $f_{Y_i}(y)$ denotes the density function of $Y_i$. The density function of  
	\begin{align}
	Z\coloneqq \sum\limits_{i=1}^{n} Y_i
	\end{align}
	is given by
	\begin{align}
	f_Z(z)= \sum\limits_{i=1}^{n} f_i(z) \prod_{\substack{j=1, \\ j \neq i }}^{n} \frac{\lambda_j}{\lambda_j-\lambda_i} , \ z \geq 0.
	\end{align}	  
\end{lemma}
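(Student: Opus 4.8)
The plan is to obtain $f_Z$ by passing to Laplace transforms, exploiting that the transform of a convolution is a product, and then inverting a partial-fraction expansion. Throughout I assume, as the stated formula implicitly requires, that the parameters $\lambda_1,\lambda_2,\dots,\lambda_n$ are pairwise distinct, so that each denominator $\lambda_j-\lambda_i$ is nonzero. First I would record the transform of a single exponential density: for $f_{Y_i}(y)=\lambda_i e^{-\lambda_i y}$ on $y\ge 0$, a direct integration gives
\[
\widehat f_{Y_i}(s)=\int_0^\infty e^{-sy} f_{Y_i}(y)\,dy=\frac{\lambda_i}{s+\lambda_i}.
\]
Because $Y_1,\dots,Y_n$ are independent, the density of $Z$ is the convolution $f_{Y_1}*\cdots*f_{Y_n}$, so its transform factorizes as
\[
\widehat f_Z(s)=\prod_{i=1}^n \frac{\lambda_i}{s+\lambda_i}.
\]

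The next step is to decompose this rational function into partial fractions. Since the poles $-\lambda_1,\dots,-\lambda_n$ are simple (this is exactly where distinctness of the $\lambda_i$ is used), we may write $\widehat f_Z(s)=\sum_{i=1}^n A_i/(s+\lambda_i)$, and the residue at $s=-\lambda_i$ is
\[
A_i=\lim_{s\to-\lambda_i}(s+\lambda_i)\,\widehat f_Z(s)=\lambda_i\prod_{\substack{j=1\\ j\ne i}}^n \frac{\lambda_j}{\lambda_j-\lambda_i}.
\]
Inverting term by term, using that the inverse transform of $1/(s+\lambda_i)$ is $e^{-\lambda_i z}$ for $z\ge 0$, yields $f_Z(z)=\sum_{i=1}^n A_i e^{-\lambda_i z}$. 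Recognizing that $\lambda_i e^{-\lambda_i z}=f_i(z)$, the factor $\lambda_i$ in $A_i$ recombines with the exponential to reproduce exactly $f_i(z)\prod_{j\ne i}\lambda_j/(\lambda_j-\lambda_i)$, which is the claimed identity. The uniqueness theorem for Laplace transforms of integrable functions then guarantees that this continuous function is indeed the density of $Z$.

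The computation is essentially routine, so there is no deep obstacle; the one point needing care is the partial-fraction step, namely verifying that the simple-pole residue reshuffles into the stated product and that the leading $\lambda_i$ recombines cleanly with $e^{-\lambda_i z}$. An alternative would be induction on $n$, writing $f_Z=f_{Y_n}*f_{Z'}$ with $Z'=\sum_{i<n}Y_i$ and invoking the inductive hypothesis on $Z'$; there the harder part is evaluating the convolution integral and collapsing the resulting double sum through a Lagrange-interpolation-type identity, which is precisely why I favor the transform route, where distinctness of the parameters enters transparently as the simplicity of the poles.
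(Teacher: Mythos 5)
Your proof is correct, but note that the paper itself never proves this lemma: it is imported as known background from \cite{bibinger2013notes}, so there is no in-paper argument to compare against. The cited reference establishes the formula by induction on $n$, evaluating the convolution $f_{Y_n}*f_{Z'}$ with $Z'=\sum_{i<n}Y_i$ and collapsing the resulting sums --- exactly the alternative route you flagged and declined. Your Laplace-transform derivation is a legitimate self-contained substitute, and each step checks out: the transform of a single exponential, the factorization $\widehat f_Z(s)=\prod_{i=1}^n \lambda_i/(s+\lambda_i)$ from independence, the residue computation $A_i=\lambda_i\prod_{j\ne i}\lambda_j/(\lambda_j-\lambda_i)$ at the simple pole $s=-\lambda_i$, and the appeal to uniqueness of the Laplace transform on integrable functions. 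You are also right to make the pairwise-distinctness assumption explicit, since the paper states the lemma without it even though the formula is meaningless otherwise (and indeed the paper applies it only with the distinct rates $\lambda_i=(N-i+1)\lambda$). The one point you could state more explicitly is why the expansion $\widehat f_Z(s)=\sum_i A_i/(s+\lambda_i)$ has no polynomial part: the product $\prod_i\lambda_i/(s+\lambda_i)$ is a proper rational function (it vanishes as $s\to\infty$), so the residue terms alone exhaust it. The trade-off between the two routes is the expected one: induction keeps everything elementary at the cost of a messier combinatorial collapse, while the transform argument localizes all the work in one residue computation but requires quoting the inversion/uniqueness theorem.
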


\subsection{Proof of Theorem \ref{t-visibility for Exponential Delays}}
\noindent We   are now  ready to prove Theorem \ref{t-visibility for Exponential Delays}. 
\begin{proof}

	For the case where $s=W$, we have	
	\begin{align*}	
	\Pr[S(t)=W]&=\Pr[S(t) \leq W] \notag \\
	&=\Pr[ X_{(W+1)}- X_{(W)} > t]\notag \\
 &=e^{-\lambda_{W+1} t}, 
	\end{align*}
	where the last equality follows Lemma \ref{Order Statistics}.\\ 
	For the case were $s \in \{W+1, W+2, \cdots, N\}$, we have
	\begin{align*}
	\Pr[&S(t)=s]=\Pr[S(t) \leq s]-\Pr[S(t) \leq s-1]  \\
	&=\Pr[ X_{(s+1)}- X_{(W)} > t] -\Pr[ X_{(s)}- X_{(W)} > t]   \\
	&=\Pr[ X_{(s)}- X_{(W)} \leq  t]-\Pr[ X_{(s+1)}- X_{(W)} \leq t] \notag  \\
	&=\Pr \left[\sum\limits_{i=W+1}^{s}  X_{(i)} - X_{(i-1)} \leq t \right] \\
	&-\Pr \left[\sum\limits_{i=W+1}^{s+1}  X_{(i)}- X_{(i-1)} \leq t \right]  \notag \\
	&=\Pr \left[\sum\limits_{i=W+1}^{s}  Y_i \leq t \right] -\Pr \left[\sum\limits_{i=W+1}^{s+1}  Y_i\leq t \right],
	\end{align*}
	where $Y_i = X_{(i)}-X_{(i-1)}$. 
	
\noindent Since $ X_1,  X_2, \cdots,  X_N$ are independent and identical  exponential random variables, then $Y_i$ is an exponential random variable with parameter $\lambda_i=(N-i+1) \lambda$, where $i \in \{2, 3, \cdots, N\}$ from Lemma \ref{Order Statistics}. Since $ Y_{1},  Y_{2}, \cdots,  Y_{N}$ are independent exponential random variables, from Lemma \ref{Sum of Exponentials}, we have
\begin{small}
	\begin{align*}
	 &\Pr[S(t)=s] =\Pr \left[\sum\limits_{i=W+1}^{s}  Y_i \leq t \right] -\Pr \left[\sum\limits_{i=W+1}^{s+1}  Y_i\leq t \right] \\ &= \sum\limits_{i=W+1}^{s} F_i(t) \prod_{\substack{j=W+1, \\ j \neq i }}^{s} \frac{\lambda_j}{\lambda_j-\lambda_i} -\sum\limits_{i=W+1}^{s+1} F_i(t) \prod_{\substack{j=W+1, \\ j \neq i }}^{s+1} \frac{\lambda_j}{\lambda_j-\lambda_i}  \\
	&=\sum\limits_{i=W+1}^{s} F_i(t) \frac{\lambda_i}{\lambda_i-\lambda_{s+1}} \prod_{\substack{j=W+1, \\ j \neq i }}^{s} \frac{\lambda_j}{\lambda_j-\lambda_i} 
	- F_{s+1}(t)\prod_{j=W+1}^{s} \frac{\lambda_j}{\lambda_j-\lambda_{s+1}}  \\ 
	&=\sum\limits_{i=W+1}^{s} F_i(t) \frac{N-i+1}{s-i+1} \prod_{\substack{j=W+1, \\ j \neq i }}^{s} \frac{N-j+1}{i-j} 
	-F_{s+1}(t)\prod_{j=W}^{s-1} \frac{N-j}{s-j} \\
	&=\sum\limits_{i=W+1}^{s} F_i(t) \frac{N-i+1}{s-i+1} \prod_{\substack{j=W+1, \\ j \neq i }}^{s} \frac{N-j+1}{i-j} 
	-F_{s+1}(t) \binom{N-W}{N-s} \\
	&=\sum\limits_{i=W+1}^{s} (1-e^{- \lambda_i t}) \frac{N-i+1}{s-i+1} \prod_{\substack{j=W+1, \\ j \neq i }}^{s} \frac{N-j+1}{i-j} \\
	&-(1-e^{-\lambda_{s+1}t}) \binom{N-W}{N-s} \\
	&=\sum\limits_{i=W+1}^{s+1} (-1)^{s-i} (1-e^{- \lambda_i t}) \binom{N-W}{N-i+1} \binom{N-i+1}{s-i+1}.
	\end{align*}
\end{small}	
\end{proof}

\subsection{Proof of Theorem \ref{theorem: consistency violation for replication}}
\noindent  In order to find the inconsistency probability,  we first need to characterize the PMF of the number of servers in the write quorum $t+Z_{(j)}$ units of time after the write completes as  given in Lemma \ref{Lemma: pmf}.
\begin{lemma}
	\label{Lemma: pmf}	
	The PMF of the number of servers in the write quorum $t+Z_{(j)}$ units of time, where $j \in \mathcal R$, after the write completes is given by
	\begin{align}
	& \Pr[S(t+Z_{(j)})=W] =  e^{-\lambda_{W+1}t} \\
	& ~~~~~~~~~~~~~~~~~~~~~~~\sum_{l=1}^{j}  \binom{N}{j} \binom{j}{l} \frac{(-1)^{j-l} \xi_{N-l+1}}{\xi_l+\lambda_{W+1}},  \notag \\	
	& \Pr[S(t+Z_{(j)})=s]= \notag \\ &\sum\limits_{i=W+1}^{s+1} (-1)^{s-i}   \binom{N-W}{N-i+1} \binom{N-i+1}{s-i+1} 
	\notag \\ &\left(1-e^{-\lambda_i t} \sum_{l=1}^{j}  \binom{N}{j} \binom{j}{l} \frac{(-1)^{j-l} \xi_{N-l+1}}{\xi_l+\lambda_i}\right),
	\end{align}	
	for $s \in \{W+1, \cdots, N\}$, where $\xi_j=(N-j+1) \xi$ and $\lambda_j=(N-j+1) \lambda$.
\end{lemma}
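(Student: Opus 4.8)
The plan is to read off the PMF of $S(t+Z_{(j)})$ from Theorem~\ref{t-visibility for Exponential Delays} by averaging over the random time shift $Z_{(j)}$, using that the write delays $X_1,\cdots,X_N$ are independent of the read delays $Z_1,\cdots,Z_N$. Since $S(t+Z_{(j)})$ counts the servers that have received the value within $t+Z_{(j)}$ units of time after completion, conditioning on $Z_{(j)}=z$ makes this count a function of the write process alone, distributed exactly as $S(t+z)$. The law of total probability then gives
\begin{align}
\Pr[S(t+Z_{(j)})=s]=\int_0^\infty \Pr[S(t+z)=s]\,f_{Z_{(j)}}(z)\,dz, \notag
\end{align}
where $f_{Z_{(j)}}$ is the density of the $j$-th smallest read delay, and the whole lemma reduces to evaluating this integral.

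First I would note that in Theorem~\ref{t-visibility for Exponential Delays} the time $t$ enters only through the factors $e^{-\lambda_{W+1}t}$ and $e^{-\lambda_i t}$. The substitution $t\mapsto t+z$ replaces each by $e^{-\lambda_i t}e^{-\lambda_i z}$, so after exchanging the finite sum with the integral every occurrence of $e^{-\lambda_i t}$ simply acquires a factor $\mathbb{E}[e^{-\lambda_i Z_{(j)}}]=\int_0^\infty e^{-\lambda_i z}f_{Z_{(j)}}(z)\,dz$. For $s=W$ this yields $\Pr[S(t+Z_{(j)})=W]=e^{-\lambda_{W+1}t}\,\mathbb{E}[e^{-\lambda_{W+1}Z_{(j)}}]$, and for $s\in\{W+1,\cdots,N\}$ each term $(1-e^{-\lambda_i t})$ becomes $1-e^{-\lambda_i t}\,\mathbb{E}[e^{-\lambda_i Z_{(j)}}]$, which already matches the shape of the claimed expression.

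It remains to put the Laplace transform $\mathbb{E}[e^{-\lambda_i Z_{(j)}}]$ into the stated binomial form, and this bookkeeping is the main obstacle. I would insert the order statistic density $f_{Z_{(j)}}(z)=\frac{N!}{(j-1)!(N-j)!}(1-e^{-\xi z})^{j-1}(e^{-\xi z})^{N-j}\,\xi e^{-\xi z}$, expand $(1-e^{-\xi z})^{j-1}$ by the binomial theorem, and integrate term by term; each term produces a pole $1/(\xi_l+\lambda_i)$. Re-indexing the sum by $l=j-m$ collapses the coefficient $\frac{N!}{(j-1)!(N-j)!}\binom{j-1}{m}$ into $\binom{N}{j}\binom{j}{l}\,\xi_{N-l+1}$ after using $\xi_{N-l+1}=l\xi$, producing exactly $\sum_{l=1}^{j}\binom{N}{j}\binom{j}{l}\frac{(-1)^{j-l}\xi_{N-l+1}}{\xi_l+\lambda_i}$. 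As a cross-check one can instead apply R\'enyi's representation (Lemma~\ref{Order Statistics}) to write $Z_{(j)}=\sum_{l=1}^{j}Y_l$ with independent $Y_l\sim\mathrm{exp}(\xi_l)$, obtaining the product $\prod_{l=1}^{j}\xi_l/(\xi_l+\lambda_i)$, and reconcile it with the sum via partial fractions (Lemma~\ref{Sum of Exponentials}); verifying agreement for $j=1,2$ confirms the identity. The probabilistic content is the one-line conditioning argument, so the only real work lies in this combinatorial matching.
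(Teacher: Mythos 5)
Your proposal is correct and follows essentially the same route as the paper: condition on $Z_{(j)}$, integrate the PMF of Theorem~\ref{t-visibility for Exponential Delays} against $f_{Z_{(j)}}$, and reduce everything to the Laplace transform $\mathbb{E}[e^{-\lambda_i Z_{(j)}}]$. The only (immaterial) difference is that you obtain the exponential-mixture form of $f_{Z_{(j)}}$ by binomially expanding the standard order-statistic density, whereas the paper gets it from the spacing representation $Z_{(j)}=\sum_{l=1}^{j}(Z_{(l)}-Z_{(l-1)})$ together with Lemma~\ref{Sum of Exponentials} --- precisely the cross-check you mention --- and your binomial bookkeeping checks out.
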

\begin{proof}
	Based on Lemma \ref{Sum of Exponentials}, we can express the probability density function of  $Z_{(j)}=\sum_{l=1}^{j} Z_{(l)}-Z_{(l-1)}$ as follows
	\begin{align*}
	f_{Z_{(j)}}(z) &= \sum_{l=1}^{j} f_l(z) \prod_{\substack{i=1, \\ i \neq l}}^{j} \frac{\xi_i}{\xi_i-\xi_l} 
	\\ &= \sum_{l=1}^{j} (-1)^{j-l} \xi_{N-l+1} \binom{N}{j} \binom{j}{l}  e^{-\xi_l z},
	\end{align*}	
	where $z \geq 0$. Therefore, from Theorem \ref{t-visibility for Exponential Delays}, we can express $\Pr[S(t+Z_{(j)})=W]$ as follows 
	\begin{align*}
	\Pr[S(t+Z_{(j)})&=W] = \int_{0}^{\infty} e^{-\lambda_{W+1}(t+z)}  f_{Z(j)}(z) \ dz 
	\\ &= e^{-\lambda_{W+1}t}  \sum_{l=1}^{j}  \binom{N}{j} \binom{j}{l} \frac{(-1)^{j-l} \xi_{N-l+1}}{\xi_l+\lambda_{W+1}},
	\end{align*}
	where $\xi_j=(N-j+1) \xi$ and $\lambda_j=(N-j+1) \lambda$. Similarly for $s \in \{W+1, W+2, \cdots, N\}$, we have
	\begin{small}
	\begin{align*}
	&\Pr[S(t+Z_{(j)})=s]=\sum\limits_{i=W+1}^{s+1} (-1)^{s-i}  \binom{N-W}{N-i+1} \binom{N-i+1}{s-i+1} 
	\\ &~~~~~~~~~~~~~~~~~~~~~~~~~~\left(1-e^{-\lambda_i t} \sum_{l=1}^{j}  \binom{N}{j} \binom{j}{l} \frac{(-1)^{j-l} \xi_{N-l+1}}{\xi_l+\lambda_i}\right).
	\end{align*}	
	\end{small}
\end{proof}

\noindent We are now ready to prove Theorem \ref{theorem: consistency violation for replication}.
\begin{proof}
 The probability of inconsistency for the case where $W=1$ and $R=1$ can be expressed as follows
	\begin{align*}
	p_t&=\Pr[r_1 \notin \mathcal S_1] \\&=\sum_{s=W}^{N}\Pr[r_1 \notin \mathcal S_1 | S(t+Z_{(1)})=s] \\ & \ \ \ \Pr[S(t+Z_{(1)})=s] \\
	&=\sum_{s=W}^{N} \left(1-\frac{s}{N} \right) \Pr[S(t+Z_{(1)})=s]  \\
	&= \frac{2}{3} \Pr[S(t+Z_{(1)})=1]+ \frac{1}{3} \Pr[S(t+Z_{(1)})=2]  \\
	&= \frac{2}{3}\frac{ \xi_1 e^{-2\lambda t}}{\xi_1+2\lambda}+ \frac{1}{3} \left( \frac{2 \xi_1 e^{-\lambda t}}{\xi_1+\lambda} -\frac{2 \xi e^{-2 \lambda t}}{\xi_1+2 \lambda}\right)  \\
	&=\frac{2}{3} \frac{\xi_1 e^{-\lambda t}}{\xi_1+\lambda} 
	=\frac{2 \xi e^{-\lambda t}}{3\xi+\lambda}.
	\end{align*}
	Similarly, for the case where $W=2$ and $R=1$, we have
	\begin{align*}
	p_t= \frac{1}{3} \Pr[S(t+Z_{(1)})=2]= \frac{1}{3} \frac{\xi_1 e^{-\lambda t}}{\xi_1+\lambda}=\frac{\xi e^{-\lambda t}}{3\xi+\lambda}.
	\end{align*}
	For the case where $R=2$, we can express the probability of inconsistency as follows
		\begin{align*}
	p_t &=\Pr \left[r_2 \notin \mathcal S_2, r_1 \notin \mathcal S_1 \right] \notag \\
	&=\Pr \left[r_2 \notin \mathcal S_2 | r_1 \notin \mathcal S_1 \right] \Pr \left[r_1 \notin \mathcal S_1 \right]. 
	\end{align*} 
    If $r_1 \notin \mathcal S_1$, it may happen that $r_1 \notin \mathcal S_2$ as well or $r_1 \in \mathcal S_2$ and these two cases need to be handled separately. Therefore, we express the inconsistency probability as follows 
	\begin{align*}
	&p_t =\Pr \left[r_2 \notin \mathcal S_2, r_1 \notin \mathcal S_1 \right] \notag \\
	&= \Pr \left[r_2 \notin \mathcal S_2 | r_1 \notin \mathcal S_2 \right] \notag \Pr \left[r_1 \notin \mathcal S_2 \right]   \\ &+  \Pr \left[ r_2 \notin \mathcal S_2 | r_1 \in \mathcal S_2-\mathcal S_1\right]  \Pr[r_1 \in \mathcal S_2- \mathcal S_1].
	\end{align*} 
It is important to note that 
	\begin{align*}
	 \Pr \left[ r_2 \notin \mathcal S_2 | r_1 \in \mathcal S_2-\mathcal S_1\right]= \Pr \left[ r_2 \notin \mathcal S_2 | r_1 \in \mathcal S_2 \right].  
	\end{align*} 
   Hence, we have
\begin{align*}
&p_t = \Pr \left[r_2 \notin \mathcal S_2 | r_1 \notin \mathcal S_2 \right] \notag \Pr \left[r_1 \notin \mathcal S_2 \right]   \\ &+  \Pr \left[ r_2 \notin \mathcal S_2 | r_1 \in \mathcal S_2 \right]  ( \Pr[r_1 \in \mathcal S_2]-\Pr[r_1 \in \mathcal S_1]),
\end{align*} 
where
	\begin{align}
	\Pr \left[r_2 \notin \mathcal S_2 | r_1 \notin \mathcal S_2 \right]&=  \sum_{s=W}^{N-1} \left(1-\frac{s}{N-1}\right) \notag \\ 
	& \ \ \ \ \Pr \left[ S(t+Z_{(2)})=s\right] \notag  \\
	&= \frac{1}{2} \Pr \left[ S(t+Z_{(2)})=1\right],
	\end{align}
	\begin{align}
	\Pr \left[r_2 \notin \mathcal S_2 | r_1 \in \mathcal S_2 \right] &=  \sum_{s=W}^{N} \left(1-\frac{s-1}{N-1}\right) \notag \\ 
	& \ \ \ \ \Pr \left[ S(t+Z_{(2)})=s\right] \notag \\
	&=\frac{1}{2} \Pr \left[ S(t+Z_{(2)})=2\right],
	\end{align}
	\begin{align}
	&\Pr[r_1 \in \mathcal S_1]=\sum_{s=W}^{N} \frac{s}{N} \Pr[S(t+Z_{(1)})=s] \notag \\
	&=\frac{1}{3} \Pr[S(t+Z_{(1)})=1]+ \notag \\ 
	&\frac{2}{3} \Pr[S(t+Z_{(1)})=2] +\Pr[S(t+Z_{(1)})=3],
	\end{align}
	and 
	\begin{align}
&\Pr[r_1 \in \mathcal S_2]=\sum_{s=W}^{N} \frac{s}{N} \Pr[S(t+Z_{(2)})=s] \notag \\
	&=\frac{1}{3} \Pr[S(t+Z_{(2)})=1] \notag \\
	&+\frac{2}{3} \Pr[S(t+Z_{(2)})=2]+ \Pr[S(t+Z_{(2)})=3].
	\end{align}
	Therefore, we can express the probability of inconsistency in this case as follows 
	\small
	\begin{align*}
	p_t&=\frac{6\xi^3e^{-2\lambda t}}{(\lambda+2\xi)(\lambda+3\xi)}. \left(\frac{2 \lambda}{(\lambda+2\xi)(\lambda+3 \xi)}-\frac{ (\lambda-\xi) e^{-\lambda t}}{(\lambda+\xi)(2\lambda+3 \xi)}\right).
	\end{align*}
	\end{proof}
\end{document}